\newcommand*{\rom}[1]{\expandafter\@slowromancap\romannumeral #1@}
\newtheorem{theorem}{Theorem}
\begin{document}
\title{On the Performance of Variable-Rate HARQ-IR over Beckmann Fading Channels}
\author[$\dag$]{Zheng~Shi}
\author[$\ddag$]{Huan~Zhang}
\author[$\ddag$]{Shenke~Zhong}
\author[$\dag$]{Guanghua~Yang}
\author[$\P$]{Xinrong~Ye}
\author[$\ddag$]{Shaodan~Ma}
\affil[$\dag$]{The School of Electrical and Information Engineering and Institute of Physical Internet, Jinan University, China}
\affil[$\ddag$]{Department of Electrical and Computer Engineering, University of Macau, Macao}
\affil[$\P$]{The College of Physics and Electronic Information, Anhui Normal University, China}
\maketitle
\begin{abstract}
This paper thoroughly investigates the performance of variable-rate hybrid automatic repeat request with incremental redundancy (HARQ-IR) over Beckmann fading channels, where Beckmann channel model is used to characterize the impacts of line-of-sight (LOS) path, time correlation between fading channels and unequal means \& variances between the in-phase and the quadrature components. The intricate channel model and variable-rate transmission strategy significantly complicate the performance analysis. The complex form of the joint distribution of channel coefficients makes the exact analysis rather difficult. Hence, the asymptotic analysis of HARQ-IR is conducted to obtain tractable results. The outage probability and the long term average throughput (LTAT) are then derived in closed-form with clear insights. Furthermore, the simplicity of the obtained expressions enables the maximization of the LTAT given an outage constraint through choosing appropriate transmission rates. Particularly, the increasing monotonicity and the convexity of outage probability with respect to transmission rates allow us to solve the problem suboptimally by combining together alternately iterating optimization and concave fractional programming, and the proposed suboptimal algorithm has a lower computational complexity than the exhaustive search algorithm.
\end{abstract}

\begin{IEEEkeywords}
Hybrid automatic repeat request, incremental redundancy, Beckmann fading, variable rate, asymptotic analysis.
\end{IEEEkeywords}
\IEEEpeerreviewmaketitle
\section{Introduction}\label{sec:sys_mod}
Hybrid automatic repeat request (HARQ) has been widely adopted in a variety of wireless communication standards owing to its inherent ability of realizing the reliable transmission, and HARQ will surely plays a very crucial role in fulfilling the objective of the ultra-reliability for 5G communications \cite{boccardi2014five,wang2018novel,wang2018uplink}. In particular, HARQ with incremental redundancy (HARQ-IR) has been proved to be the most effective HARQ scheme that not only ensures the transmission reliability but also boosts the spectral efficiency considerably, albeit at the price of higher implementation complexity \cite{caire2001throughput,tan2017spectral,du2014distributed}. It is imperative to carry out the performance analysis of HARQ-IR systems for further enhancement, and the analytical results can facilitate the optimal system design with plenty of sophisticated optimization tools to alleviate the computational complexity.

The performance of HARQ-IR scheme over various propagation environments has already been extensively reported in the literature. For instance, closed-form expressions were derived for the most fundamental performance metric (i.e., outage probability) of HARQ-IR over quasi-static and fast Rayleigh fading channels in \cite{makki2013green} and \cite{chelli2013performance}, respectively. However, neither quasi-static fading nor fast fading channel model can offer a good experimental fit to HARQ transmissions in dense scattering environments, in which time correlation between fading channels takes place. Generally speaking, the presence of channel correlation precludes the exploitation of more time diversity, and causes performance deterioration. To address this problem, the outage performance of HARQ-IR over time-correlated Rayleigh fading channels was examined by using polynomial fitting technique in \cite{shi2015analysis}, and it was proved that full diversity can be achieved by HARQ-IR even in the presence of time correlation. Unfortunately, most of the existing literature \cite{makki2013green,chelli2013performance,shi2015analysis} analyzed the performance of HARQ-IR by assuming Rayleigh fading channels, which do not take account of the effect of line-of-sight (LOS) link especially when in light-shadow fading environments. Needless to say, overlooking the existence of LOS component will underestimate the performance of HARQ-IR, and consequently hinder us to substantially improve the system performance. To overcome this shortcoming, correlated Rician fading channel model was used to capture the effect of time correlation as well as that of the LOS component in \cite{zheng2018goodput}, the asymptotic expression of outage probability was then derived in closed-form in high signal-to-noise ratio (SNR) regime, and the resultant simple expression enabled the maximization of goodput via joint optimization of transmission rate and powers. Whereas the results obtained in \cite{zheng2018goodput} are inapplicable to the case when characterizing the scattering from rough surfaces, where unequal means and variances between the in-phase and the quadrature components of channel impulse response arises \cite{pena2017analysis}. Hence, it is worthwhile to study the performance of HARQ-IR under a more general channel model to quantify these impact factors, and the performance analysis will offer valuable guidelines for the design of practical HARQ-IR systems. Aside from this motivation, most of the prior works assumed fixed-rate transmission for HARQ-IR, that is, the transmission rate remains unchangeable during retransmissions, while the variable-rate transmission strategy is seldom considered for HARQ-IR. To the best of the author's knowledge, the asymptotic analysis of variable-rate HARQ-IR under high SNR has never been conducted even in fast fading channels. This motivates us to further investigate the performance of variable-rate HARQ-IR.

To tackle the aforementioned two issues, this paper thoroughly investigates the performance of variable-rate HARQ-IR over Beckmann fading channels, where Beckmann distribution is frequently employed to characterize the impacts of LOS path, time correlation between fading channels and unequal means \& variances between the in-phase and the quadrature components\cite{ramirez2018extension}. It is noteworthy that Beckmann fading is a versatile channel model which encompasses Rician, Hoyt and Rayleigh fading channels as special cases. However, the intricate channel model and variable-rate transmission strategy pose a considerable challenge for the performance analysis. The complex form of the joint probability density function (PDF) of channel coefficients makes the exact performance analysis impossible. In order to obtain tractable results, this paper thus recourses to the asymptotic analysis of HARQ-IR under high SNR. The outage probability and the long term average throughput (LTAT) are derived in closed-form with clear insights. Furthermore, the simple form expressions enable the maximization of the LTAT given an outage constraint through properly choosing transmission rate for each HARQ round. In particular, the alternately iterating optimization is utilized to solve the problem suboptimally by decomposing it into multiple concave fractional programming sub-problems, and the Dinkelbach’s algorithm is then applied to solve the sub-problems. As opposed to the exhaustive search algorithm that is prohibitively time-consuming, the proposed sub-optimal algorithm converges very fast only after a few iterations which obviously avoid a huge amount of computations. Moreover, all the analytical results are validated by extensive numerical examples.

The rest of this paper is organized as follows. Section \ref{sec:sys_mod} introduces the system model of variable-rate HARQ-IR over Beckmann fading channels. With the established system model, the outage probability and the LTAT of variable-rate HARQ-IR are then approximately analyzed from information-theoretical perspective in Section \ref{sec:per}. Based on the analytical results, the LTAT is maximized through adaptive rate selection in Section \ref{sec:max}. Section \ref{sec:con} finally draws some remarkable conclusions.

\section{System Model}\label{sec:sys_mod}
\subsection{Variable-Rate HARQ-IR Scheme}\label{sec:harq_ir}
Following variable-rate HARQ-IR scheme, $b$ original information bits is first encoded into a long codeword consisting of $L$ symbols, and the $L$ symbols are randomly and independently drawn from a complex normal distribution \cite{szczecinski2013rate}. The generated long codeword is then chopped into $K$ sub-codewords, each with length $L_k$, and $L = \sum\nolimits_{k=1}^{K}L_k$, and $K$ stands for the maximum allowable number of transmissions. As opposed to the fixed-rate HARQ that assumes the same lengths of all $K$ sub-codewords, i.e., $L_1=\cdots=L_K$ \cite{chelli2013performance}, this assumption is no longer required by variable-rate HARQ. In other words, the transmission rates in different HARQ rounds are not necessarily set to be equal. Hence, the variable-rate HARQ-IR is capable of adapting the transmission rate to the corresponding channel condition in each HARQ round. The $K$ sub-codewords are sequentially delivered to the destination until it succeeds to recover the message. As long as the destination fails to decode the message after each transmission, a negative acknowledgement (NACK) message will be fed back to the source to request retransmission, and meanwhile the erroneously received packets need to be stored in a dedicated buffer at the destination for subsequent decodings. To be specific, the destination will combine the subsequently received packets with all the erroneously received packet to form a long codeword for joint decoding. Once either the maximum allowable number of transmissions is reached or the destination successfully reconstructs the message, the source initiates the same HARQ transmissions for the next message by receiving an acknowledgement message from the destination.



\subsection{Beckmann Fading Channel Model}
Denote by ${\bf x}_k$ the $k$th sub-codeword of length $L_k$. We assume a block fading channel in each transmission, that is, each symbol of the transmitted sub-codeword experiences an identical channel realization. Accordingly, the signal received in the $k$th transmission is given by
\begin{equation}\label{eqn:sign_mod}
  {\bf y}_k = \sqrt{P_k}h_k {\bf x}_k + {\bf n}_k,
\end{equation}
where ${\bf n}_k$ denotes a complex additive white Gaussian noise (AWGN) vector with zero mean vector and covariance matrix $\mathcal N_0{\bf I}_{L_k}$, i.e., ${\bf n}_k \sim {\cal CN}(0,\mathcal N_0{{\bf I}_{L_k}})$, ${\bf I}_{L_k}$ represents an identity matrix, $h_k$ denotes the block fading channel coefficient in the $k$th transmission, and $P_k$ refers to the transmit power in the $k$th transmission.


To account for  the impacts of LOS path, time correlation between fading channels and unequal means \& variances between the in-phase and the quadrature components, the random vector ${\bf h} = (h_1,\cdots,h_K)$ is modeled as Beckmann distribution, whose elements follow complex Gaussian distribution. The joint PDF of ${\bf h}$ is given by (\ref{eqn:joint_PDF_backmann}), shown at the top of the next page \cite{zhu2016asymptotic},
\begin{figure*}[!t]
\begin{align}\label{eqn:joint_PDF_backmann}
{f_{\bf{h}}}\left( {\bf{h}} \right) = \frac{1}{{{\pi ^K}\sqrt {\det \left( {\bf{R}} \right)\det \left( {{{\bf{R}}^*} - {{\bf{C}}^{\rm{H}}}{{\bf{R}}^{ - 1}}{\bf{C}}} \right)} }}  \exp\left({{ - \left( {\begin{array}{*{20}{c}}
{{{\left( {{\bf{h}} - {\bar{ \bf{h}}}} \right)}^{\rm{H}}}}&{{{\left( {{\bf{h}} - {\bar{\bf{h}}}} \right)}^{\rm{T}}}}
\end{array}} \right){{\left( {\begin{array}{*{20}{c}}
{\bf{R}}&{\bf{C}}\\
{{{\bf{C}}^{\rm{H}}}}&{{{\bf{R}}^*}}
\end{array}} \right)}^{ - 1}}\left( {\begin{array}{*{20}{c}}
{{\bf{h}} - {\bar{\bf{h}}}}\\
{{{\bf{h}}^*} - {\bar{\bf{h}}}^*}
\end{array}} \right)} }\right),
\end{align}
\hrulefill
\end{figure*}
where $\bar {\bf h} = \left(\bar h_1,\cdots,\bar h_K\right)$ is a constant complex LOS component vector, which is also the expectation of $\bf h$; $\tilde{\bf h} = {\bf h} - \bar {\bf h}$ represents the scattering components, and $\tilde{\bf h} = {\bf h} - \bar {\bf h}$ follows complex normal distribution with covariance matrix ${\bf R} = {\rm E}(\tilde {\bf h}\tilde {\bf h}^{\rm H})$ and relation matrix ${\bf C} = {\rm E}(\tilde {\bf h}\tilde {\bf h}^{\rm T})$, where $(\cdot)^{\rm H}$ and $(\cdot)^{\rm T}$ denote the transpose and conjugate transpose, respectively. It is worthy to mention that $\bf R$ is a hermitian matrix and $\bf C$ is a symmetric matrix. The covariance matrix of the real random vector ${\bf h}_r = (\Re(\tilde {\bf h});\Im(\tilde{\bf h}))^{\rm T}$ is
\begin{align}\label{eqn:convarian_matrix}
{\bf{V}} &= {\rm{E}}\left( {{{\bf{h}}_r}{{\bf{h}}_r}^{\rm{T}}} \right) \notag\\
&= \frac{1}{2} \left[ {\begin{array}{*{20}{c}}
{{\mathop{\rm Re}\nolimits} \left( {{\bf{R}} + {\bf{C}}} \right)}&{{\mathop{\rm Im}\nolimits} \left( { - {\bf{R}} + {\bf{C}}} \right)}\\
{{\mathop{\rm Im}\nolimits} \left( {{\bf{R}} + {\bf{C}}} \right)}&{{\mathop{\rm Re}\nolimits} \left( {{\bf{R}} - {\bf{C}}} \right)}
\end{array}} \right],
\end{align}
where $\Re{(\cdot)}$ and $\Im{(\cdot)}$ denote the real and imaginary parts of complex number, respectively. If $\tilde{\bf h}$ is circularly-symmetric, i.e., ${\bf C}=\bf 0$ and (\ref{eqn:joint_PDF_backmann}) specializes to the PDF of Rician channel coefficients.

According to (\ref{eqn:sign_mod}), the received SNR in the $k$th transmission is given by
\begin{equation}\label{eqn:SNR}
{\gamma _k} = \frac{{P_k}{\left| {{h_k}} \right|^2}}{\mathcal N_0} ,
\end{equation}


\section{Performance Analysis}\label{sec:per}
\subsection{Outage Probability}
Outage probability is proved as the most fundamental performance metric of HARQ schemes \cite{caire2001throughput}. For HARQ-IR, the outage probability is directly determined by the cumulative distribution function (CDF) of accumulated mutual information $I_K = \sum\nolimits_{k = 1}^K {L_k{{\log }_2}\left( {1 + {\gamma _k}} \right)}$. Specifically, assuming that information-theoretic capacity achieving channel coding is adopted for HARQ-IR, an outage event happens when the accumulated mutual information is below $b$. The outage probability after $K$ transmissions is thus written as
\begin{align}\label{eqn:out_prob_def}
{p_{out,K}} &= \Pr \left( {{I_K}  < b} \right)\notag\\
&={\rm{Pr}}\left( {\sum\limits_{k = 1}^K {\frac{1}{R _k}{{\log }_2}\left( {1 + \frac{{{P_k}}}{{{{\cal N}_0}}}{{\left| {{h_k}} \right|}^2}} \right)}  < 1} \right),
\end{align}
where ${{R _k} = \frac{{{b}}}{L_k}}$ denotes the transmission rate of the $k$-th HARQ round. With the joint PDF of ${\bf h}$, (\ref{eqn:out_prob_def}) can be obtained as
\begin{equation}\label{eqn:out_exact}
{p_{out,K}} = \int\nolimits_{\sum\limits_{k = 1}^K {\frac{1}{R _k}\log_2 \left( {1 + \frac{{{P_k}}}{{{{\cal N}_0}}}{{\left| {{h_k}} \right|}^2}} \right)}  < 1} {{f_{{{\bf{h}}_K}}}({{\bf{h}}_K})d{{\bf{h}}_K}},
\end{equation}
where $d{{\bf{h}}_K} \buildrel \Delta \over = d\Re \{ {h_1}\} d\Im \{ {h_1}\} \cdots d\Re \{ {h_K}\} d\Im \{ {h_K}\} $. Unfortunately, the complex form of the joint PDF makes it nearly impossible for us to rewrite (\ref{eqn:out_exact}) in closed-form. However, the bounded domain of integration in (\ref{eqn:out_exact}), ${\sum\nolimits_{k = 1}^K {\frac{1}{R _k}\log_2 \left( {1 + \frac{{{P_k}}}{{{{\cal N}_0}}}{{\left| {{h_k}} \right|}^2}} \right)}  < 1}$, inspires us to derive the asymptotic outage probability in high SNR regime, i.e., $\frac{{{P_k}}}{{{{\cal N}_0}}}\to \infty$. Specifically, as $\frac{{{P_k}}}{{{{\cal N}_0}}}\to \infty$, the domain of integration shrinks to zero, i.e., ${{\left| {{h_k}} \right|}^2} \to 0$, we thus have 
\begin{align}\label{eqn:pdf_highSNR}
&{f_{{{\bf{h}}_K}}}({{\bf{h}}_K}) \approx {f_{{{\bf{h}}_K}}}({\bf{0}}) = \frac{1}{{{\pi ^K}\sqrt {\det \left( {\bf{R}} \right)\det \left( {{{\bf{R}}^*} - {{\bf{C}}^{\rm{H}}}{{\bf{R}}^{ - 1}}{\bf{C}}} \right)} }} \notag\\
&\times\exp{\left( - \frac{1}{2}\left( {\begin{array}{*{20}{c}}
{{\bar{\bf{h}}}^{\rm{H}}}&{{\bar{\bf{h}}}^{\rm{T}}}
\end{array}} \right){{\left( {\begin{array}{*{20}{c}}
{\bf{R}}&{\bf{C}}\\
{{{\bf{C}}^{\rm{H}}}}&{{{\bf{R}}^*}}
\end{array}} \right)}^{ - 1}}\left( {\begin{array}{*{20}{c}}
{{\bar{\bf{h}}}}\\
{{\bar{\bf{h}}}^*}
\end{array}} \right)\right)},
\end{align}
where $(\cdot)^*$ stands for the complex conjugate. Putting (\ref{eqn:pdf_highSNR}) into (\ref{eqn:out_exact}), the outage probability is approximately obtained as
\begin{equation}\label{eqn:out_approx}
{p_{out,K}} \approx {f_{{{\bf{h}}_K}}}({\bf{0}})\int_{\sum\limits_{k = 1}^K {\frac{1}{R _k}\log_2 \left( {1 + \frac{{{P_k}}}{{{{\cal N}_0}}}{{\left| {{h_k}} \right|}^2}} \right)}  < 1} {d{{\bf{h}}_K}}.
\end{equation}
By virtue of the unit step function $u(\cdot)$, (\ref{eqn:out_approx}) can be expressed as
\begin{multline}\label{eqn:out_stepunit}
{p_{out,K}} \approx {f_{{{\bf{h}}_K}}}({\bf{0}})\times \\
\int\limits_{ - \infty }^\infty  { \cdots \int\limits_{ - \infty }^\infty  {u\left( {1 - \sum\limits_{k = 1}^K {\frac{1}{R_k}{{\log }_2}\left( {1 + \frac{{{P_k}}}{{{{\cal N}_0}}}{{\left| {{h_k}} \right|}^2}} \right)} } \right)d{{\bf{h}}_K}} }\\
\triangleq p_{out\_asy,K}.
\end{multline}
By using the inverse Laplace transform of $u(x)$, i.e., $u(x) = \frac{1}{2\pi i} \int_{c-i\infty}^{c+i\infty}\frac{1}{s}e^{sx}ds$, we have
\begin{multline}\label{eqn:out_stepunitinverse}
p_{out\_asy,K} = {f_{{{\bf{h}}_K}}}({\bf{0}}) \times \\
\int\limits_{ - \infty }^\infty  { \cdots \int\limits_{ - \infty }^\infty  {\frac{1}{{{2\pi \rm i}}}\int\limits_{c - {\rm i}\infty }^{c + {\rm i}\infty } {\frac{{{e^{s\left( {1 - \sum\limits_{k = 1}^K {\frac{1}{R _k}{{\log }_2}\left( {1 + \frac{{{P_k}}}{{{{\cal N}_0}}}{{\left| {{h_k}} \right|}^2}} \right)} } \right)}}}}{s}ds} d{{\bf{h}}_K}} }\\
={f_{{{\bf{h}}_K}}}({\bf{0}})\frac{1}{{{2\pi \rm i}}}\int\limits_{c - {\rm i}\infty }^{c + {\rm i}\infty } {\frac{{{e^s}}}{s}ds}\\
\times\int\limits_{ - \infty }^\infty  { \cdots \int\limits_{ - \infty }^\infty  {\prod\limits_{k = 1}^K {{{\left( {1 + \frac{{{P_k}}}{{{{\cal N}_0}}}{{\left| {{h_k}} \right|}^2}} \right)}^{ - \frac{{s}}{{\ln 2^{R _k}}}}}} d{{\bf{h}}_K}} },
\end{multline}
where ${\rm i}=\sqrt{-1}$. By transforming coordinates from the Cartesian to polar, i.e., $\Re\left\{h_k \right\} =  r_k\cos{\theta_k}$ and $\Im\left\{h_k \right\} =  r_k\sin{\theta_k}$, (\ref{eqn:out_stepunitinverse}) can be rewritten as
\begin{multline}\label{eqn:out_asy_polar}
p_{out\_asy,K}={f_{{{\bf{h}}_K}}}({\bf{0}})\frac{1}{{{2\pi \rm i}}}\int\limits_{c - {\rm i}\infty }^{c + {\rm i}\infty } {\frac{{{e^s}}}{s}ds} \\
\times \int\limits_0^\infty  { \cdots \int\limits_0^\infty  {\prod\limits_{k = 1}^K {{r_k}{{\left( {1 + \frac{{{P_k}}}{{{{\cal N}_0}}}{r_k}} \right)}^{ - \frac{s}{{\ln {2^{{R_k}}}}}}}} d{r_1} \cdots d{r_K}} } \\
\times \int\limits_0^{2\pi } { \cdots \int\limits_0^{2\pi } {d{\theta _1} \cdots d{\theta _K}} }\\
={\left( {2\pi } \right)^K}{f_{{{\bf{h}}_K}}}({\bf{0}})\frac{1}{{{2\pi \rm i}}}\int\limits_{c - {\rm i}\infty }^{c + {\rm i}\infty } {\frac{{{e^s}}}{s}ds} \\
\times \prod\limits_{k = 1}^K {\int\limits_0^\infty  {{r_k}{{\left( {1 + \frac{{{P_k}}}{{{{\cal N}_0}}}{r_k}^2} \right)}^{ - \frac{{s}}{{\ln {2^{{R_k}}}}}}}d{r_k}} } .
\end{multline}
Then making the change of variables $x = \frac{{{P_k}}}{{{{\cal N}_0}}}{r_k}^2$, (\ref{eqn:out_asy_polar}) can be simplified as
\begin{multline}\label{eqn:out_simp}
p_{out\_asy,K}={{\pi } ^K}{f_{{{\bf{h}}_K}}}({\bf{0}})\prod\limits_{k = 1}^K {\frac{{{{\cal N}_0}}}{{{P_k}}}} \\
 \times \frac{1}{{{2\pi \rm i}}}\int\limits_{c - {\rm i}\infty }^{c + {\rm i}\infty } {\frac{{{e^s}}}{s}ds} \prod\limits_{k = 1}^K {\int\limits_0^\infty  {{{\left( {1 + x} \right)}^{ - \frac{{s}}{\ln {2^{{R_k}}}}}}dx} }.
\end{multline}
Evidently, the inner integrals in (\ref{eqn:out_simp}) exist if and only if $- \frac{{s}}{\ln2^{{R_k}}} <  - 1 $, or equivalently $ s > \ln2^{{R_k}}$. Hence, by setting $c > \max \left\{ { \ln2^{{R_1}}, \cdots , \ln2^{{R_K}}} \right\}$, (\ref{eqn:out_simp}) can be further derived as
\begin{multline}\label{eqn:out_asy_invers}
p_{out\_asy,K}={ {\pi } ^K}{f_{{{\bf{h}}_K}}}({\bf{0}})\prod\limits_{k = 1}^K {\frac{{{{\cal N}_0}}}{{{P_k}}}} \\
 \times \underbrace {\frac{1}{{{2\pi \rm i}}}\int\limits_{c - {\rm i}\infty }^{c + {\rm i}\infty } {\frac{{{e^s}}}{{s\prod\limits_{k = 1}^K {\left( {\frac{s}{{\ln {2^{{R_k}}}}} - 1} \right)} }}ds} }_{{g_K}\left( {{R_1}, \cdots ,{R_K}} \right)}.
\end{multline}
By using the definition of Fox's H function \cite[eq.T.I.3]{ansari2017new}, $p_{out\_asy,K}$ is obtained in closed-form as (\ref{eqn:out_asy_H_fun}) at the top of the next page.
\begin{figure*}[!t]
\begin{align}\label{eqn:out_asy_H_fun}
 p_{out\_asy,K}&= { {\pi }^K}{f_{{{\bf{h}}_K}}}({\bf{0}})\prod\limits_{k = 1}^K {\frac{{{{\cal N}_0}}}{{{P_k}}}} \frac{1}{{{2\pi \rm i}}}\int\limits_{c - {\rm i}\infty }^{c + {\rm i}\infty } {\frac{{\Gamma \left( s \right)\prod\limits_{k = 1}^K {\Gamma \left( {\frac{s}{{\ln {2^{{R_k}}}}} - 1} \right)} {e^s}}}{{\Gamma \left( {s + 1} \right)\prod\limits_{k = 1}^K {\Gamma \left( {\frac{s}{{\ln {2^{{R_k}}}}}} \right)} }}ds}\notag \\
 &= {{\pi } ^K}{f_{{{\bf{h}}_K}}}({\bf{0}})\prod\limits_{k = 1}^K {\frac{{{{\cal N}_0}}}{{{P_k}}}} H_{K + 1,K + 1}^{0,K + 1}\left( {\left. {\begin{array}{*{20}{c}}
{\left( {1,1} \right),\left( {2,\frac{1}{{\ln {2^{{R_1}}}}}} \right), \cdots ,\left( {2,\frac{1}{{\ln {2^{{R_K}}}}}} \right)}\\
{\left( {0,1} \right),\left( {1,\frac{1}{{\ln {2^{{R_1}}}}}} \right), \cdots ,\left( {1,\frac{1}{{\ln {2^{{R_K}}}}}} \right)}
\end{array}} \right|e} \right).
\end{align}
\hrulefill
\end{figure*}
As proved in Appendix \ref{app:out_matrix}, if $R_k\ne R_l$ for any $k,l \in [1,K]$ and $k\ne l$, (\ref{eqn:out_asy_H_fun}) can further be written in a compact form as
\begin{equation}\label{eqn:out_matform}
 p_{out\_asy,K} = {{\pi } ^K}{f_{{{\bf{h}}_K}}}({\bf{0}})\prod\limits_{k = 1}^K {\frac{{{{\mathcal N}_0}}}{{{P_k}}}} \left( {{{\left( { - 1} \right)}^K} + \frac{\det \left( {{\bf{A}}} \right)}{\det\left({{\bf{B}}} \right)}} \right),
\end{equation}
where
\begin{equation}\label{eqn:def_A}
{\bf{A}} = \left( {\begin{array}{*{20}{l}}
{{R_1}}&{{R_1}^2}&\cdots &{{R_1}^{K - 1}}&{{2^{{R_1}}}}\\
{{R_2}}&{{R_2}^2}&\cdots &{{R_2}^{K - 1}}&{{2^{{R_2}}}}\\
 \vdots & \vdots & \ddots & \vdots & \vdots \\
{{R_K}}&{{R_K}^2}&\cdots &{{R_K}^{K - 1}}&{{2^{{R_K}}}}
\end{array}} \right),
\end{equation}
\begin{equation}\label{eqn:def_B}
{\bf{B}} = \left( {\begin{array}{*{20}{l}}
1&{{R_1}}& \cdots &{{R_1}^{K - 1}}\\
1&{{R_2}}& \cdots &{{R_2}^{K - 1}}\\
 \vdots & \vdots & \ddots & \vdots \\
1&{{R_K}}& \cdots &{{R_K}^{K - 1}}
\end{array}} \right).
\end{equation}
Fig. \ref{fig:out} verifies the foregoing analytical results. Herein, a constant power allocation scheme is adopted in the numerical analysis, i.e., $P_1=\cdots=P_K=P$, we denote by $\gamma_T=\frac{P}{\mathcal N_0}$ the transmit SNR. The covariance and relation matrices ${\bf R}$, ${\bf C}$ are constructed by using exponential correlation and constant correlation models, such that ${\bf R}=\left( \left[ \rho^{|m-n|}\right]_{1 \le m \le n \le K}\right)$, ${\bf C}={\rm i}\left( \left[ \rho^K\right]_{1 \le m \le n \le K}\right)$. Unless otherwise mentioned, the system parameters are set as $\bar{h}_1=\cdots=\bar{h}_K=\frac{1}{\sqrt{2}}+{\rm i}\frac{1}{\sqrt{2}}$, $R_1 = \cdots = R_K = 4$bps/Hz, $K=4$, $\rho=0.8$ in the following numerical analysis. Clearly from Fig. \ref{fig:out}, there is a perfect agreement between the asymptotic and the simulation results in high SNR regime, which confirms the validity of the analytical results. Furthermore, the outage probability significantly decreases with the increase of $K$, it thus demonstrates the essence of the reliability of the HARQ scheme.
\begin{figure}
  \centering
  \includegraphics[width=3in]{./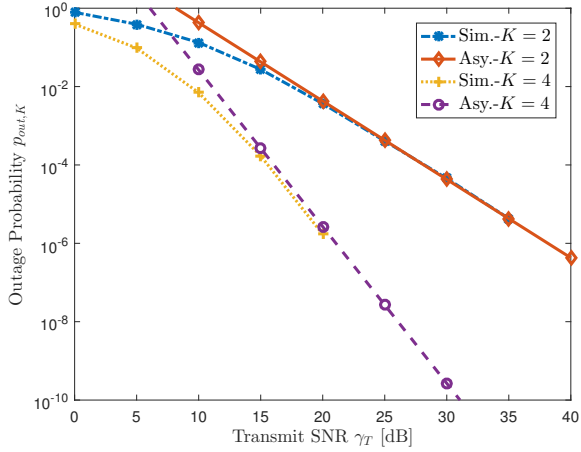}
  \caption{Verification of the asymptotic outage analysis.}\label{fig:out}
\end{figure}

\subsection{Long Term Average Throughput}
The spectral efficiency of HARQ systems is usually characterized by the long term average throughput (LTAT), which measures the average system throughput over a long period, and the LTAT of variable-rate HARQ-IR system can be expressed in terms of outage probability as \cite{szczecinski2010variable,shi2016inverse,shi2017asymptotic}
\begin{equation}\label{eqn:ltat_def}
\mathcal T = \frac{{b\left( {1 - {p_{out,K}}} \right)}}{{\sum\limits_{k = 1}^{K} {{L_k}{p_{out,k-1}}} }}=\frac{{{1 - {p_{out,K}}}}}{{\sum\limits_{k = 1}^{K} {\frac{1}{R_k}{p_{out,k-1}}} }},
\end{equation}
where ${p_{out,0}}=1$, by convention. Substituting (\ref{eqn:out_asy_H_fun}) or (\ref{eqn:out_matform}) into (\ref{eqn:ltat_def}), the asymptotic LTAT can be obtained.

In Fig. \ref{fig:ltat}, the LTAT is plotted against the SNR under the same parameter settings as used in Fig. \ref{fig:out}. It is readily found that the asymptotic results coincide with the simulation ones well in high SNR regime, which further justifies the correctness of the asymptotic analysis. As the transmit SNR $\gamma_T$ increases to $\infty$, the LTAT gradually approaches to a certain bound. More specifically, from (\ref{eqn:ltat_def}), the LTAT is upper bounded by the initial transmission rate $R_1$, i.e., $\mathcal T < R_1=4$bps/Hz. In addition, it can be observed from Fig. \ref{fig:ltat} that the fading correlation across different transmission attempts negatively affects the LTAT.
\begin{figure}
  \centering
  \includegraphics[width=3in]{./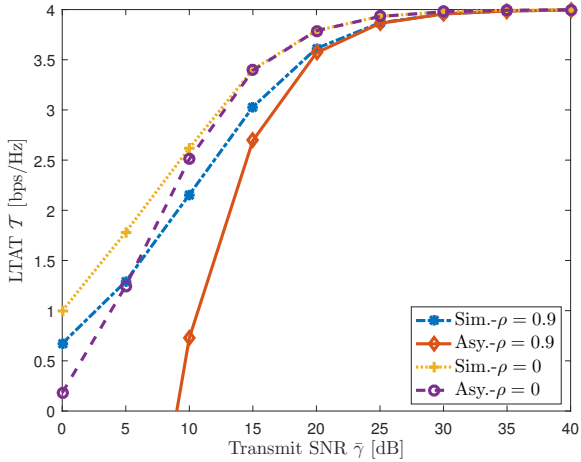}
  \caption{The LTAT versus the transmit SNR $\gamma_T$.}\label{fig:ltat}
\end{figure}
\section{Maximization of the LTAT}\label{sec:max}
In this section, we will demonstrate that the asymptotic results for the outage probability and the LTAT have a great potential value in facilitating the optimal design of HARQ-IR system. The maximization of the LTAT is taken herein as an example for illustration. More specifically, the transmission rate $R_k$ for each HARQ round is properly designed to maximize the LTAT while maintaining the outage constraint, i.e., ${{p_{out,K}} \le \varepsilon }$, where $\varepsilon$ denotes the maximum allowable outage probability. The maximization problem of the LTAT is therefore casted as
\begin{equation}\label{eqn:ltat_max}
\begin{array}{*{20}{c}}
{\mathop {{\rm{maximize}}}\limits_{{R_1}, \cdots ,{R_K}} }&{\cal T}\\
{{\rm{subject~to}}}&{{p_{out,K}} \le \varepsilon }.
\end{array}
\end{equation}
It is nearly impossible to find the optimal solution of transmission rates to (\ref{eqn:ltat_max}) due to no closed-form expression for the exact outage probability. However, the asymptotic results obtained in the last section can be used to find an approximate optimal solution. Clearly, the approximate solution will become more accurate as the transmit SNR increases. However, (\ref{eqn:ltat_max}) is not a convex problem, because ${p_{out,K}}$ is not a convex function with respect to (w.r.t.) ${{R_1}, \cdots ,{R_K}}$ (the Hessian matrix of second partial derivatives is not always positive semidefinite), and the outage constraint consequently produces a non-convex feasible region. Even though the exhaustive search algorithm can be resorted to solve the problem, obviously it is prohibitively time-consuming and also sacrifices lots of computational resources. Fortunately, the following theorem motivates us to solve it suboptimally with low computational complexity.
\begin{theorem}\label{the:conv}
Given ${R_1}, \cdots, {R_{t-1}}, {R_{t+1}}, \cdots, {R_K}$, ${g_K}\left( {{R_1}, \cdots, {R_K}} \right)$ is an increasing and convex function of $R_t$ for $t\in[1,K]$. Accordingly, ${p_{out\_asy,K}}$ is proved to be an increasing and convex function of the $t$-th transmission rate $R_t$ given the fixed values of the other transmission rates.
\end{theorem}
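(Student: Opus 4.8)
The plan is to sidestep both the Fox $H$-function form (\ref{eqn:out_asy_H_fun}) and the determinant form (\ref{eqn:out_matform}), and to work instead with a geometric representation of $g_K$ in which the dependence on a single rate $R_t$ is cleanly isolated. Comparing (\ref{eqn:out_stepunit}) with (\ref{eqn:out_asy_invers}), and carrying out in the former the very polar-coordinate change together with the substitution $x_k=(P_k/\mathcal N_0)r_k^2$ that was used to reach the latter, one reads off that the Bromwich integral $g_K$ is nothing but a volume,
\[
g_K(R_1,\dots,R_K)=\mathrm{Vol}\,\Omega,\qquad \Omega=\Bigl\{{\bf x}\in[0,\infty)^K:\ \textstyle\sum_{k=1}^{K}\tfrac{\log_2(1+x_k)}{R_k}<1\Bigr\}.
\]
Since every point of $\Omega$ satisfies $x_k<2^{R_k}-1$, the set $\Omega$ is bounded and $g_K$ is finite.

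First I would integrate out the coordinate $x_t$ alone. Freezing ${\bf x}_{-t}=(x_k)_{k\ne t}$ and writing $c({\bf x}_{-t})=1-\sum_{k\ne t}\log_2(1+x_k)/R_k$, the defining inequality of $\Omega$ becomes $\log_2(1+x_t)<R_t\,c$; this carves out the interval $x_t\in[0,2^{R_tc}-1)$ of length $2^{R_tc}-1$ when $c>0$, and is void (null) when $c\le0$. Hence
\[
g_K=\int_{\{c({\bf x}_{-t})>0\}}\bigl(2^{R_tc({\bf x}_{-t})}-1\bigr)\,d{\bf x}_{-t}.
\]
The decisive point is that the region of integration $\{c({\bf x}_{-t})>0\}=\{{\bf x}_{-t}\in[0,\infty)^{K-1}:\sum_{k\ne t}\log_2(1+x_k)/R_k<1\}$ is bounded and, crucially, \emph{independent of $R_t$}, because $c$ involves only the fixed rates $\{R_k\}_{k\ne t}$.

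With the domain frozen, monotonicity and convexity reduce to a pointwise property of the integrand. For each fixed $c>0$ the map $R_t\mapsto 2^{R_tc}-1=e^{R_tc\ln2}-1$ is strictly increasing and strictly convex in $R_t$, with derivatives $c\ln2\,\,2^{R_tc}>0$ and $(c\ln2)^2\,2^{R_tc}>0$. Because on $\{c>0\}$ one has $c\in(0,1]$ and the domain has finite measure, differentiation under the integral sign is justified, giving
\[
\frac{\partial g_K}{\partial R_t}=\ln2\!\int_{\{c>0\}}\!\!c\,2^{R_tc}\,d{\bf x}_{-t}>0,\qquad \frac{\partial^2 g_K}{\partial R_t^2}=(\ln2)^2\!\int_{\{c>0\}}\!\!c^2\,2^{R_tc}\,d{\bf x}_{-t}>0,
\]
where strict positivity holds because $\{c>0\}$ contains a neighbourhood of the origin (there $c=1$) and thus has positive measure. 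Therefore $g_K$ is increasing and convex in $R_t$. Finally, in $p_{out\_asy,K}=\pi^K f_{{\bf h}_K}({\bf 0})\prod_{k}(\mathcal N_0/P_k)\,g_K$ the prefactor depends only on the channel statistics and the powers and is constant in the rates, so it is a positive scalar and $p_{out\_asy,K}$ inherits both the monotonicity and the convexity in $R_t$.

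The main obstacle is conceptual rather than computational: it lies in recognizing the volume representation and, above all, in choosing to integrate $x_t$ out \emph{first}, so that the entire $R_t$-dependence is confined to a pointwise-convex integrand over a region that does not move with $R_t$. A frontal assault on the determinant ratio $\det({\bf A})/\det({\bf B})$ or on the Mellin--Barnes integrand would scatter $R_t$ across numerator and denominator and make the sign of the second derivative extremely awkward to pin down; the order-of-integration trick is exactly what renders the monotonicity and convexity transparent.
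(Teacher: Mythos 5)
Your proof is correct, and it starts from the same geometric fact as the paper --- the identification of $g_K$ with the volume of $\Omega=\{\mathbf{x}\in[0,\infty)^K:\sum_k\log_2(1+x_k)/R_k<1\}$, which is exactly the paper's Eq.~(\ref{eqn:g_R1_K}), and your monotonicity argument (the domain expands with $R_t$, here upgraded to an explicit positive derivative) is the paper's as well. Where you genuinely diverge is the convexity step. The paper stays in the Mellin--Barnes representation: it differentiates the contour integrand twice in $R_t$ and recognizes $\frac{\partial^2 g_K}{\partial R_t^2}$ as $\frac{2}{R_t}$ times the first derivative of $g_{K+1}(R_1,\dots,R_K,R_{t'})$ evaluated at the duplicated rate $R_{t'}=R_t$, so that the sign follows from the already-proved nonnegativity of first derivatives (note the integral representation, unlike the determinant form (\ref{eqn:out_matform}), tolerates repeated rates, so this is legitimate). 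You instead integrate out the coordinate $x_t$ first, confining the entire $R_t$-dependence to the pointwise strictly convex profile $2^{R_t c}-1$ over a bounded, $R_t$-independent domain $\{c>0\}$, and differentiate under the integral sign. Your route is more elementary and self-contained --- it avoids formal differentiation of a Bromwich integral and the recursion to $g_{K+1}$, and it delivers strict (not merely weak) monotonicity and convexity --- while the paper's recursion is shorter on the page and elegantly recycles the monotonicity lemma at one higher dimension. Both correctly transfer the conclusion to $p_{out\_asy,K}$ through the rate-independent positive prefactor $\pi^K f_{\mathbf{h}_K}(\mathbf{0})\prod_k(\mathcal{N}_0/P_k)$ in (\ref{eqn:out_asy_invers}).
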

\begin{proof}
  Please refer to Appendix \ref{app:conv}.
\end{proof}
The finding in the theorem enables us to solve (\ref{eqn:ltat_max}) by virtue of alternately iterating optimization \cite{bezdek2002some}. More specifically, denote by $\left\{\left(R_1^{(r)},\cdots,R_K^{(r)}\right):r=0,1,\cdots\right\}$ the iterate sequence that starts from an initial rates $\left(R_1^{(0)},\cdots,R_K^{(0)}\right)$ via a sequence of single-variable maximization as (\ref{eqn:ltat_maxiter}), shown at the top of the next page, where the transmission rates are successively updated for each $r$ as index $j$ runs from $1$ to $K$ until the change of the LTAT between successive iterates is ignorable or the maximum allowable number of iterations on $r$ is reached.
\begin{figure*}[!t]
\begin{equation}\label{eqn:ltat_maxiter}
\begin{array}{*{20}{l}}
{R_j^{(r + 1)} = \arg }&{\mathop {{\rm{maximize}}}\limits_{{R_j}} }&{\mathcal T\left( {R_1^{(r + 1)}, \cdots ,R_{j - 1}^{(r + 1)},{R_j},R_{j + 1}^{(r)}, \cdots ,R_K^{(r)}} \right)}\\
{}&{{\rm{subject}}~{\rm{to}}}&{{p_{out,K}}\left( {R_1^{(r + 1)}, \cdots ,R_{j - 1}^{(r + 1)},{R_j},R_{j + 1}^{(r)}, \cdots ,R_K^{(r)}} \right) \le \varepsilon}
\end{array},\,j=1,2,\cdots,K.
\end{equation}
\end{figure*}
It can be proved by using Theorem \ref{the:conv} that (\ref{eqn:ltat_maxiter}) is a concave fractional programming problem, because the objective function in (\ref{eqn:ltat_maxiter}) can be written as (\ref{eqn:ltat_rew_conc}), whose numerator and denominator are concave and convex functions w.r.t. $R_j$, respectively.
\begin{figure*}[!t]
\begin{align}\label{eqn:ltat_rew_conc}
&\mathcal T\left( {R_1^{(r + 1)}, \cdots ,R_{j - 1}^{(r + 1)},{R_j},R_{j + 1}^{(r)}, \cdots ,R_K^{(r)}} \right) = \notag\\
&\frac{{{R_j}\left( {1 - {p_{out,K}}\left( {R_1^{(r + 1)}, \cdots ,R_{j - 1}^{(r + 1)},{R_j},R_{j + 1}^{(r)}, \cdots ,R_K^{(r)}} \right)} \right)}}{{\sum\limits_{k = 1}^{j - 1} {\frac{{{R_j}{p_{out,k - 1}}\left( {R_1^{(r + 1)}, \cdots ,R_{k - 1}^{(r + 1)}} \right)}}{{R_k^{(r + 1)}}}} + {p_{out,j - 1}}\left( {R_1^{(r + 1)}, \cdots ,R_{j - 1}^{(r + 1)}} \right)+\sum\limits_{k = j + 1}^K {\frac{{{R_j}{p_{out,k - 1}}\left( {R_1^{(r + 1)}, \cdots ,R_{j - 1}^{(r + 1)},{R_j},R_{j + 1}^{(r)}, \cdots ,R_{k - 1}^{(r)}} \right)}}{{R_k^{(r)}}}} }}.
\end{align}
\hrulefill
\end{figure*}
Meanwhile, the feasible region of (\ref{eqn:ltat_maxiter}) is a convex set due to the convexity of the outage probability uncovered in Theorem \ref{the:conv}. Hereby, (\ref{eqn:ltat_maxiter}) can be converted into a convex optimization problem \cite{dinkelbach1967nonlinear}, which can be solved with the globally optimal solution by using Dinkelbach's algorithm.
Clearly, the optimal LTAT for each iteration will increase as $r$ grows large. Note that the upper bound of $\mathcal T$ exists, i.e., $\mathcal T \le b$, the alternately iterating algorithm would converge to a locally optimal solution $\left(R_1^{*},\cdots,R_K^{*}\right)$ with a finite LTAT. The proposed algorithm for the suboptimal rate selection is outlined in Algorithm \ref{alg:rs}. It is obvious that the proposed algorithm can significantly alleviate the computational overhead compared to the exhaustive search algorithm.

\begin{algorithm}
   \caption{Suboptimal Rate Selection Algorithm}\label{alg:rs}
    \begin{algorithmic}[1]
        \State Generate the initial rates $\left(R_1^{(0)},\cdots,R_K^{(0)}\right)$
        \State $r \leftarrow 0$
        \Repeat
        \For{$j = 1$ to $K$}
            \State Solve (\ref{eqn:ltat_maxiter}) by using Dinkelbach's algorithm and obtain $R_j^{(r + 1)}$.
        \EndFor
        \State $r \leftarrow r+1$
        \Until{convergence}
        \State $\left(R_1^{*},\cdots,R_K^{*}\right) \leftarrow \left(R_1^{(r)},\cdots,R_K^{(r)}\right)$
\end{algorithmic}
\end{algorithm}

To illustrate the superiority of the proposed variable-rate selection algorithm (labeled as 'Variable') over the conventional fixed-rate selection algorithm (labeled as 'Fixed'), Fig. \ref{fig:opt_ltat} plots the optimal LTATs attained by the two algorithm against the outage threshold $\varepsilon$, where the fixed-rate selection algorithm assumes the transmission rate in different HARQ rounds remains constant \cite{chelli2013performance}. It is readily seen in Fig. \ref{fig:opt_ltat} that the proposed algorithm outperforms the conventional fixed-rate selection algorithm, and the decrease of time correlation factor $\rho$ would enlarge the gap between the optimal LTATs achieved by the two algorithms. In other words, the improvement in the throughput performance brought by using variable-rate transmission becomes negligible when under high time correlation, this is due to the fact that the highly correlated fading channels produce almost a constant channel realization across different HARQ rounds, which consequently leads to approximately the same transmission rates for different transmission attempts. Hereby, the superior performance of the proposed algorithm shown in Fig. \ref{fig:opt_ltat} justifies the great significance of the asymptotic analysis of the variable-rate HARQ-IR system in this paper. In addition, the optimal LTAT increases with $\varepsilon$ because of the expansion of the feasible region, and the optimal LTAT increases with $K$ as well. The negative impact of time correlation can also be found in Fig. \ref{fig:opt_ltat}.
\begin{figure}
  \centering
  \includegraphics[width=3in]{./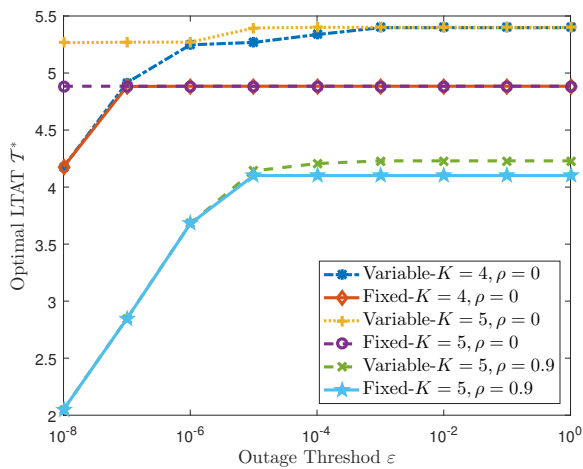}
  \caption{The optimal LTAT versus the outage threshold $\varepsilon$.}\label{fig:opt_ltat}
\end{figure}

%
%
%


\section{Conclusions}\label{sec:con}
The performance of variable-rate HARQ-IR over Beckmann fading channels has been thoroughly investigated in this paper. The closed-form expressions of outage probability and LTAT have been obtained by conducting the asymptotic analysis, in which the impacts of LOS path, time correlation between fading channels and unequal means \& variances between the in-phase and the quadrature components have been quantified. The numerical analysis has verified the analytical results, and also has shown that the presence of correlation degrades the system performance. With the tractable outage and throughput expressions, the LTAT has been maximized given a certain outage constraint by properly choosing transmission rates. Specifically, based on the finding that the outage probability is an increasing and convex function of transmission rate, the optimization problem has been solved by proposing an iterative algorithm. In contrast to the exhaustive search algorithm, the proposed iterative algorithm converges quickly, thus significantly reduces the computational burden.

\section{Acknowledgements}
This work was supported in part by National Natural Science Foundation of China under grants 61801192 and 61601524, in part by the Natural Science Foundation of Anhui Province under grant 1808085MF164, in part by the Macau Science and Technology Development Fund under grants 037/2017/AMJ, 091/2015/A3 and 020/2015/AMJ, and in part by the Research Committee of University of Macau under grants MYRG2014-00146-FST and MYRG2016-00146-FST.

\appendices
\section{Proof of (\ref{eqn:out_matform})}\label{app:out_matrix}
By applying the inverse Laplace formula of \cite[eq.5.2.20]{bateman1954tables} to (\ref{eqn:out_asy_invers}), it follows that
\begin{equation}\label{eqn:asym_out_inverse_lap}
{{g_K}\left( {{R_1}, \cdots ,{R_K}} \right)} =  {{{\left( { - 1} \right)}^K} + {\underbrace {\sum\limits_{p = 1}^K {\frac{{{2^{{R_p}}}}}{{\prod\limits_{k = 1,k \ne p}^K {\frac{1}{{{R_k}}}\left( {{R_p} - {R_k}} \right)} }}} }_\psi } } .
\end{equation}
After some tedious algebraic manipulations, $\psi$ can be rewritten as
\begin{equation}\label{eqn:psi_rew}
\psi  =  \frac{\prod\limits_{k = 1}^K {{R_k}}{\sum\limits_{p = 1}^K {{{\left( { - 1} \right)}^{K + p}}\frac{1}{{{R_p}}}{2^{{R_p}}}\prod\limits_{1 \le m \ne p < n \ne p \le K}^K {\left( {{R_n} - {R_m}} \right)} } }}{{\prod\limits_{1 \le m < n \le K}^K {\left( {{R_n} - {R_m}} \right)} }}.
\end{equation}
By identifying the products in (\ref{eqn:psi_rew}) with the Vandermonde determinant together with the definition of the determinant, hence the denominator of (\ref{eqn:psi_rew}) is nothing but the Vandermonde determinant $\bf B$, and the numerator of (\ref{eqn:psi_rew}) is the expansion of $\bf A$ w.r.t. its last column.
\section{Proof of Theorem \ref{the:conv}}\label{app:conv}
To complete the proof, it suffices to show that the first and the second partial derivatives of ${g_K}\left( {{R_1}, \cdots ,{R_K}} \right)$ w.r.t. $R_t$ are greater than or equal to zero. To begin with, combining (\ref{eqn:out_approx}) with (\ref{eqn:out_asy_invers}) gives
\begin{align}\label{eqn:g_R1_K}
{g_K}\left( {{R_1}, \cdots ,{R_K}} \right) &= \frac{1}{{{2\pi \rm i}}}\int\limits_{c - {\rm i}\infty }^{c + {\rm i}\infty } {\frac{{{2^s}}}{{s\prod\limits_{k = 1}^K {\left( {\frac{s}{{{R_k}}} - 1} \right)} }}ds}\notag\\
&= \int_{\sum\limits_{k = 1}^K {\frac{1}{{{R_k}}}\log_2 \left( {1 + {t_k}} \right)}  < 1} {d{t_1} \cdots d{t_K}}.
\end{align}
From (\ref{eqn:g_R1_K}), it is readily found that ${g_K}\left( {{R_1}, \cdots ,{R_K}} \right)$ is an increasing function of $R_t$ by fixing ${R_1}, \cdots, {R_{t-1}}, {R_{t+1}} \cdots , {R_K}$ for any $t \in [1,K]$, because the domain of the integration expands as $R_t$ increases. Accordingly, the first order partial derivative of ${g_K}\left( {{R_1}, \cdots ,{R_K}} \right)$ w.r.t. $R_t$ satisfies $\frac{{\partial {g_K\left( {{R_1}, \cdots ,{R_K}} \right)}}}{{\partial {R_t}}} \ge 0$ and%
\begin{multline}\label{eqn:g_R1_K_firstOrder}
\frac{{\partial {g_K\left( {{R_1}, \cdots ,{R_K}} \right)}}}{{\partial {R_t}}} =\\
{\left( {\frac{1}{{{R_t}}}} \right)^2}\frac{1}{{{2\pi \rm i}}}\int\limits_{c - {\rm i}\infty }^{c + {\rm i}\infty } {\frac{{{2^s}}}{{\left( {\frac{s}{{{R_t}}} - 1} \right)\prod\limits_{k = 1}^K {\left( {\frac{s}{{{R_k}}} - 1} \right)} }}ds} .
\end{multline}
Then taking the second order partial derivative of ${g_K}\left( {{R_1}, \cdots ,{R_K}} \right)$ w.r.t. $R_t$ yields
\begin{multline}\label{eqn:second_deri}
\frac{{{\partial ^2}{g_K\left( {{R_1}, \cdots ,{R_K}} \right)}}}{{\partial {R_t}^2}} \\
=\frac{2}{{{R_t}^3}}\frac{1}{{{2\pi \rm i}}}\int\limits_{c - {\rm i}\infty }^{c + {\rm i}\infty } {\frac{{{2^s}}}{{{{\left( {\frac{s}{{{R_t}}} - 1} \right)}^2}\prod\limits_{k = 1}^K {\left( {\frac{s}{{{R_k}}} - 1} \right)} }}ds} \\
=\frac{2}{{{R_t}}}{\left. {\frac{\partial {g_{K + 1}}\left( {{R_1}, \cdots ,{R_K},{R_{t'}}} \right)}{{\partial {R_t}}}} \right|_{{R_{t'}} = {R_t}}}.
\end{multline}
With the nonnegativity of $\frac{{\partial {g_K\left( {{R_1}, \cdots ,{R_K}} \right)}}}{{\partial {R_t}}}$ in (\ref{eqn:g_R1_K_firstOrder}), (\ref{eqn:second_deri}) indicates that $\frac{{{\partial ^2}{g_K\left( {{R_1}, \cdots ,{R_K}} \right)}}}{{\partial {R_t}^2}} \ge 0$. Thus the convexity of $g_K\left( {{R_1}, \cdots ,{R_K}} \right)$ is proved, and from (\ref{eqn:out_asy_invers}), the convexity of $g_K\left( {{R_1}, \cdots ,{R_K}} \right)$ then follows. The proof is eventually completed.
\bibliographystyle{ieeetran}
\bibliography{Asy_ana}
\end{document}